\newcommand{\bx}{{\boldsymbol{x}}}
\newcommand{\bu}{{\boldsymbol{u}}}
\newcommand{\by}{{\boldsymbol{y}}}
\newcommand{\bz}{{\boldsymbol{z}}}
\newcommand{\bd}{{\boldsymbol{d}}}
\newcommand{\bw}{{\boldsymbol{w}}}
\newcommand{\bv}{{\boldsymbol{v}}}
\pgfplotsset{compat=newest}
\title[Neural Integral Control Barrier Functions]{Neural Differentiable Integral Control Barrier Functions for Unknown Nonlinear Systems with Input Constraints}
\author{%
 \Name{Vrushabh Zinage} \Email{vrushabh.zinage@utexas.edu}\\
 \addr Aerospace Engineering, University of Texas at Austin, Texas, USA
 \AND
\Name{Rohan Chandra} \Email{rchandra@utexas.edu}\\
 \addr Computer Science, University of Texas at Austin, Texas, USA
 \AND
 \Name{Efstathios Bakolas} \Email{bakolas@austin.utexas.edu}\\
 \addr Aerospace Engineering, University of Texas at Austin, Texas, USA
}
\begin{document}

\newcommand{\fknown}{\mathrm{F}}
\newcommand{\funk}{\mathrm{g}}
\newcommand{\constr}{\Psi}

\newcommand{\state}{x}
\newcommand{\stateSpace}{\mathcal{X}}
\newcommand{\dynFun}{f}
\newcommand{\controlInput}{u}
\newcommand{\controlSpace}{\mathcal{U}}
\newcommand{\trajectory}{\tau}
\newcommand{\stateDim}{n}
\newcommand{\controlDim}{m}

\newcommand{\unknownTerm}{g}
\newcommand{\vectorField}{F}
\newcommand{\numUnknownTerms}{d}

\newcommand{\nnParamsAll}{\Theta}
\newcommand{\nnParams}{\theta}
\newcommand{\numParams}{k}

\newcommand{\predNextState}{\Gamma}
\newcommand{\loss}{\mathcal{J}}
\newcommand{\dataset}{\mathcal{D}}
\newcommand{\datasetDim}{|\dataset|}
\newcommand{\numTrainingTrajectories}{i}
\newcommand{\trajectoryTimeHorizon}{T}
\newcommand{\rollout}{n_r}
\newcommand{\odesolve}{\mathrm{ODESolve}}

\newcommand{\inequalConstr}{\Psi}
\newcommand{\equalConstr}{\Phi}
\newcommand{\constrDomain}{\mathcal{C}}
\newcommand{\numInequalConstr}{l}
\newcommand{\numEqualConstr}{v}
\newcommand{\collocationPoints}{\Omega}
\newcommand{\numCollocationPoints}{|\collocationPoints|}
\newcommand{\augmentedLagrangian}{\mathcal{L}}
\newcommand{\constrPenalty}{\mu}
\newcommand{\lagrangeVar}{\lambda}
\newcommand{\lagrangeVarAll}{\Lambda}
\newcommand{\totalNumEqualConstr}{N_{\equalConstr}}
\newcommand{\totalNumInequalConstr}{N_{\inequalConstr}}

\newcommand{\collectionUnknownTerms}{G_{\nnParamsAll}}

\newcommand{\constrTol}{\epsilon}

\newcommand{\massMat}{M}
\newcommand{\corForce}{C}
\newcommand{\actuationForce}{\tau}
\newcommand{\contactForce}{\mathcal{F}}
\newcommand{\jacobian}{J}

\newcommand{\rohan}[1]{\textbf{\textcolor{red}{$\leftarrow$ #1}}}
\newcommand{\rc}[1]{\textbf{\textcolor{blue}{#1}}}
\maketitle

\begin{abstract}%
In this paper, we propose a deep learning based control synthesis framework for fast and online computation of controllers that guarantees the safety of general nonlinear control systems with unknown dynamics in the presence of input constraints. Towards this goal, we propose a framework for simultaneously learning the unknown system dynamics, which can change with time due to external disturbances, and an integral control law for trajectory tracking based on imitation learning. Simultaneously, we learn corresponding safety certificates, which we refer to as Neural Integral Control Barrier Functions (Neural ICBF's), that automatically encode both the state and input constraints into a single scalar-valued function and enable the design of controllers that can guarantee that the state of the unknown system will never leave a safe subset of the state space. Finally, we provide numerical simulations that validate our proposed approach and compare it with classical as well as recent learning based methods from the relevant literature.
\end{abstract}

\begin{keywords}%
  safety, nonlinear systems, integral control barrier functions, learning-based control, system identification%
\end{keywords}

\section{Introduction}


Modern self-driving cars \cite{hussain2018autonomous_cars}, factory robots, and multi-robot setups \cite{beltrame2018engineering_multi_robot_systems_1} used in unpredictable areas frequently need to adhere to vital safety measures while pursuing a specific task \cite{wabersich2023data_safety_systems}. Particularly, modern autonomous driving technology capitalizes on the engineered and consistent interactions between vehicles and roads, making them predictable for control tasks. However, in the case of off-road autonomy, natural terrains do not offer the same structured predictability, introducing challenges like vegetation, uneven surfaces, limited visibility, obstacles, and variable terrain properties. In such conditions, navigating at high speeds causes conventional perception, planning, and control methods to become ineffective. 
Consequently, there is a need to develop fast online learning paradigms that can easily adapt to changing dynamics, are computationally inexpensive and are able to generate safe control inputs.


One of the most popular methods for designing controllers that can guarantee safety is through the use of Control Barrier Functions (CBF's). Real-world CBF safety applications assume control-affine nonlinear systems with quadratic objective functions (QP's) with respect to the control inputs. 
However, there exist two significant challenges associated with this CBF-centric QP approach. First, the system's dynamics must be control affine, otherwise, the optimization process would involve a series of nonlinear programs (NLP's) that are both computationally expensive to solve (with no guarantees of convergence to global or local minimizers of the problem) and difficult to analyze \cite{son2019safety_cbf_nlp_1,seo2022safety_cbf_nlp_2,agrawal2017discrete_cbf_nlp_3}. Second, the sequence of QP's may not always be recursively feasible, especially in the presence of input constraints. Although the problem of recursive feasibility \cite{zeng2021enhancing_recursive_1,xiao2023feasibility_recursive_2,xiao2023learning_recursive_3,liu2023feasibility_recursive_4} and handling input constraints \cite{agrawal2021safe_input_constrained_1,breeden2021high_input_constraints_2,breeden2023robust_input_constrained_3,black2023consolidated_input_constrained_4} have been addressed in the literature, these works rely on three strong assumptions, namely that a valid CBF is known apriori, the nonlinear system has a control-affine structure, and a stabilizing feedback controller is known a-priori.

A common assumption is that the robotic systems that leverage CBFs to guarantee safety perfectly know which areas in their environment have been deemed unsafe and which are safe. Utilizing this knowledge can result in concrete safety assurances when converted to CBFs. However, in real-world scenarios, this presumption might not always be valid, hindering the broader use of barrier functions. For instance, imagine a robot functioning in a setting where it is not aware of obstacle boundaries. If we view these limits as level sets of continuously differentiable functions, deriving closed-form expression for the barrier functions corresponding to these barriers can be a very complex task. In the absence of these functions, the safety benefits of CBFs cannot be exploited. To address this issue, \cite{srinivasan2020synthesis_svm_cbf} proposes the use of state vector machines (SVMs) that learn the safe regions using the sensor measurements. However, these learned CBFs are restricted to only control affine systems and also cannot handle input constraints in general.
\subsection{Main contributions}

The main contributions of the paper are two fold. First, we propose Neural Integral Control Barrier Functions (NICBFs) for joint encoding state/safety and input constraints for general unknown nonlinear systems. Second, we propose a real-time control synthesis framework that is based on the combination of imitation learning-based control with the learned NICBF to synthesize safe control inputs for the nonlinear systems with unknown dynamics (including systems that are not control-affine). Furthermore, we provide an upper bound for the absolute error between the learned control inputs and the actual inputs that guarantee safety. Finally, we compare our proposed approach with well-established baselines such as Nonlinear MPC, CADRL \cite{cadrl}, ORCA-MAPF \cite{orcamapf} to show its efficacy in terms of computational time and cost required for the generated trajectories. 
\section{Related work\label{sec:related_work}}
CBF-based traditional approaches fall into two primary categories: model-based techniques and data-driven methods \cite{qin2021learning_neural_cbf_1,ahmadi2019safe_neural_cbf_2,yu2023learning_neural_cbf_3}. Model-based strategies \cite{ames2019control_ames_cbf_start,zinage2023disturbance} require a-priori understanding of the environment and set the CBF parameters in advance. These CBF constraints, being affine with respect to the control variable, pave the way for a quadratic program (QP) controller that ensures safe navigation. On the other hand, data-driven methods leverage neural networks \cite{dawson2022safe_neural_cbf_4,taylor2020learning_linear_cbf_3,zinage2023neural_koopman_cbf} to estimate/learn CBFs directly.
Approaches that utilize CBF's to guarantee safety and guide the exploration in episodic supervised learning of uncertain linear dynamics \cite{cheng2019end_linear_cbf_1,wang2018safe_linear_cbf_2,taylor2020learning_linear_cbf_3,taylor2020control_linear_cbf_4}. Most of these methods operate on the presumption that an appropriate CBF is already given. They can be considered as supplementary to our findings. In \cite{yaghoubi2020training_dnn_cbf}, a strategy rooted in imitation learning is employed to train a deep neural network (DNN) in order to emulate a CBF-centric controller.
\cite{xiao2021barriernet} uses BarrierNet coupled with imitation learning to compute safe control synthesis for nonlinear systems. However, their approach is restricted to known system dynamics and assumes that a valid CBF is known apriori. There have also been several other works on neural CBFs. However, their applications are complementary to this work. For instance, they have been employed to learn the unknown safety criteria from expert trajectories \cite{robey2020learning_other_neural_cbf_1,boffi2021learning_other_neural_cbf_2}, or to simultaneously learn a safe policy alongside its safety certificate in a reinforcement learning (RL) context \cite{qin2021learning_other_neural_cbf_3,meng2021reactive_other_neural_cbf_4,wang2021learning_other_neural_cbf_5}. 


On the other hand, Imitation Learning (IL) corresponds to a machine learning paradigm where an autonomous agent strives to learn a behavior by emulating an expert's demonstrations. IL-based approaches are much faster than traditional optimization-based approaches. Typically, the demonstrations comprise state-input pairs, generated by an expert policy during real-world execution. 
However, a shared drawback among the IL algorithms \cite{Kostrikov2019ImitationLearningViaOffPolicyDistributionMatching, Dadashi2020PrimalWassersteinImitationLearning,zheng2022imitation,bco,ifovideo} is their inability to encode state/safety and input constraints
during the demonstration phase. 
\begin{figure}
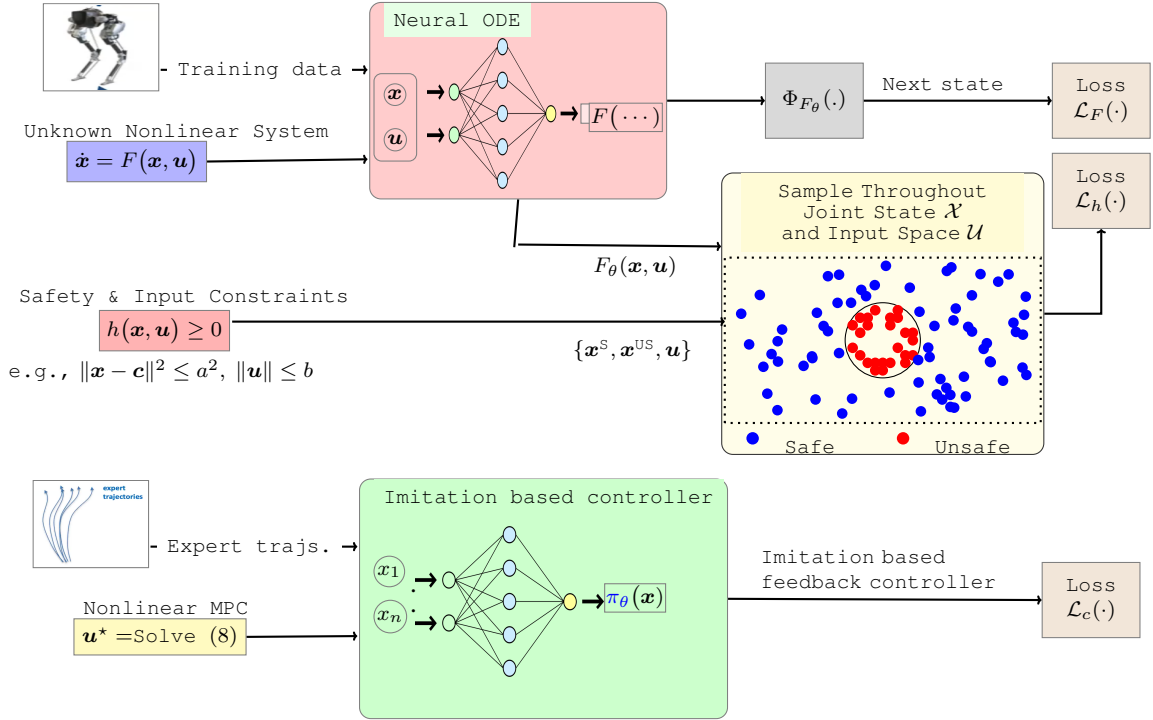

\input{figs/icbf_main}
    
    \vspace{-0.2cm} 
    
\input{figs/imitation_block}
\vspace{-1.5cm}
\caption{  \small  Proposed framework for joint learning of Integral Control Barrier Functions (ICBFs) $h_\theta(\bx,\bu)$, unknown nonlinear system $F_\theta(\bx,\bu)$ using Neural ODE's and imitation-based controller $\pi_\theta(\bx)$ using expert trajectories from Nonlinear MPC. Using these learned models, we synthesize safe controllers \eqref{eqn:u_safe_neural} that guarantee safety for general nonlinear systems in the presence of input constraints (i.e. $\bu\in\mathcal{U}$).}
\end{figure}

\section{Preliminaries and Problem Statement\label{sec:problem_statement}}


Consider the following general nonlinear controlled system given by
\begin{align}
    \dot{\bx}=F(\bx,\bu),\quad\quad \bx(0)=\bx^0
    \label{eqn:nonlinear_system}
\end{align}
where $\bx\in\mathcal{X}\subset\mathbb{R}^n$, $\bu\in\mathcal{U}\in\mathbb{R}^m$, $F:\mathcal{X}\times\mathcal{U}\rightarrow\mathbb{R}^n$ is a continuously differentiable function, $\mathcal{X}$ and $\mathcal{U}$ are compact sets. Now, consider a scalar-valued function $b(\bx):\mathcal{X}\rightarrow\mathbb{R}$ such that the following holds true
\begin{align}
    b(\bx)>0,\;\forall\; \bx\in\text{int}(\mathcal{S}_\bx),\quad b(\bx)=0,\;\forall \;\bx\in\delta\mathcal{S}_\bx,\quad b(\bx)<0,\;\forall \;\bx\in\mathcal{X}\setminus\mathcal{S}_\bx
    \label{eqn:cbf_safety_sets}
\end{align}
where $\text{int}(\mathcal{S}_\bx)$ and $\delta\mathcal{S}_\bx$ denotes the interior and boundary of the safe set $\mathcal{S}_\bx\subset\mathcal{X}$ respectively. The Control Barrier Function (CBF) for \eqref{eqn:nonlinear_system} is defined as follows
\begin{definition}
    \normalfont A function $b:\mathcal{X}\rightarrow\mathbb{R}$ which satisfies the conditions in \eqref{eqn:cbf_safety_sets} is said to be a Control Barrier Function (CBF) if there exists a $\mathcal{K}_\infty$\footnote{A continuous function $\gamma$ is said to be class-$\mathcal{K}_\infty$ function if it is continuously increasing, $\gamma(0)=0$ and $\underset{x\rightarrow\infty}{\lim}\;\gamma(x)=0$.} function $\gamma$ such that the following holds true
    \begin{align}
        \underset{\bu\in\mathcal{U}}{\inf}\;\dot{b}(\bx):=\underset{\bu\in\mathcal{U}}{\inf}\;\frac{\partial b}{\partial \bx}F(\bx,\bu)\geq -\gamma(b(\bx))
        \label{eqn:cbf_condition}
    \end{align}
    where $\dot{b}(\bx)$ denotes the time derivative of $b$ along the system trajectories \eqref{eqn:nonlinear_system}. We define the set $ K_{\text{CBF}}(\bx)=\{\bu\in\mathcal{U}:\;\dot{b}(\bx)\geq-\gamma(b(\bx))\}$ which consists of all control inputs $\bu$ which satisfy the condition \eqref{eqn:cbf_condition}. 
\end{definition}
Note that if the system \eqref{eqn:nonlinear_system} is control affine i.e. $F(\bx,\bu)=f(\bx)+g(\bx)\bu$ for smooth functions $f:\mathcal{X}\rightarrow\mathbb{R}^n$ and $g:\mathcal{X}\rightarrow\mathbb{R}^{n\times m}$, then the synthesis of safe control inputs that guarantee forward invariance for $\mathcal{S}_\bx$ can be transformed to the following QP:
\begin{align}
  \textbf{CBF-QP}\quad &\bu_S(\boldsymbol{x}):=\underset{\bu\in\mathbb{R}^m}{\text{argmin}}\quad\|\bu-k(\boldsymbol{x})\|^2\quad\text{s.t.} \quad L_fb(\bx)+L_gb(\bx)\bu\geq-\gamma(b(\bx)).
\label{eqn:quadratic_problem}
\end{align}
 where $L_f$ and $L_g$ denote the Lie derivatives of functions $f$ and $g$ respectively and $k(\bx)$ is a stabilizing feedback controller. Note that the function $\gamma$ within the CBF constraint \eqref{eqn:cbf_condition} indicates the degree to which safety is emphasized, thus allowing the agent to act either aggressively or cautiously. The CBF-QP controller addresses the problem of producing a control input sequence $\{\bu\}^T_{t=0}$, ensuring both convergence towards the goal and invariance of the set $\mathcal{S}_x$. However, if the QP \eqref{eqn:quadratic_problem} is infeasible at a particular time step $t$, it means that there is no valid solution considering the CBF constraints at that moment. In other words, the sequence of QPs becomes recursively infeasible. Consequently, the agent remains safe in its present position but stops its journey to the target. This can lead to a breakdown in the agent's navigation. Furthermore, the safe control synthesis is based on the assumption that the governing system is control affine in nature and also assumes that the stabilizing feedback controller $k(\bx)$ is known a-priori. In the following section, we briefly discuss Integral Control Barrier functions (ICBF's) that are able to encode both the state and input constraints in a single scalar function and also applicable to general nonlinear controlled systems. 

\subsection{Integral Control Barrier Functions (ICBF)\label{subsec:icbf}}
Consider the general integral feedback law which is defined by $\dot{\bu}=\phi(\bx, \bu)$ with $\bu(0):=\bu^0$
where $\phi: \mathcal{X} \times \mathcal{U} \rightarrow \mathbb{R}^m$ is a continuously differentiable function. Combining this feedback law with the \eqref{eqn:nonlinear_system} results in:
\begin{align}
[\dot{\bx} ,\;\dot{\bu}]^\mathrm{T}
=[F(\bx, \bu),\;\phi(\bx, \bu)]^\mathrm{T} ,\quad [\bx(0),\;\bu(0)]
=[\bx^0,\;\bu^0]^\mathrm{T}
\label{eqn:integrated_augmeted_system}
\end{align}
 We define $\bz:=\left[\bx^{\mathrm{T}}, \bu^{\mathrm{T}}\right]^{\mathrm{T}}$ as the augmented state of the integrated system and $\mathcal{S}:=\{(\bx,\bu)\in\mathcal{S}_{\bx}\times\mathcal{U}\}$ as a safety set that includes both state and input constraints, described by a scalar-valued function $h(\bz):=h(\bx,\bu)$ i.e. $\mathcal{S}=\{(\bx,\bu):\;h(\bx,\bu)\geq 0\}$. For every trajectory of the integrated system, the following must hold $\dot{h}(\bz)+\gamma(h(\bz)) \geq 0 $ to guarantee forward invariance.
However, typical CBF techniques are not effective here due to the integral control law. To address this, an auxiliary input $\bv$ is introduced in \cite{ames2020integral_cbf}. Consequently, the integral law modifies to $\dot{\bu}=\phi(\bx,\bu)+\bv$ where $\bv$ is designed to guarantee forward invariance for set $\mathcal{S}$.

\begin{definition}
\normalfont An Integral Control Barrier Function (ICBF) is a function $h:\mathcal{X}\times\mathcal{U}\rightarrow\mathbb{R}$ that determines a safe set $\mathcal{S}$. For it to be an ICBF, for all $(\bx,\bu)$ such that $p(\bx, \bu)=0$ then $q(\bx, \bu) \leq 0$ where
\begin{align}
&p(\bx, \bu)  :=\left(\frac{\partial h(\bx,\bu)}{\partial \bu}\right)^{\mathrm{T}},\; q(\bx, \bu):  =-\left(\frac{\partial h(\bx,\bu)}{\partial \bx} F(\bx, \bu)+\frac{\partial h(\bx,\bu)}{\partial \bu} \phi(\bx, \bu) +\gamma(h(\bx, \bu))\right)\nonumber
\end{align}
\end{definition}
Note that the notion of ICBF can also be defined for nonlinear systems with higher relative degree i.e. for systems where $p(\bx,\bu)=0$ does not imply $q(\bx,\bu)\leq 0$ \cite{zinage2023disturbance}. For the ICBF $h$, safe control inputs are synthesized from the following theorem.
\begin{theorem}
\normalfont [\cite{ames2020integral_cbf}] If there exists an integral feedback controller $\phi(\bx,\bu)$ and a safety set $\mathcal{S}$ defined by an ICBF $h(\bx,\bu)$, then tweaking the integral controller to $\dot{\bu}=\phi(\bx, \bu)+\bv^\star(\bx, \bu)$
where $\bv^\star(\bx, \bu)$ is obtained by solving the following QP:
\begin{align}
\bv^\star(\bx, \bu) = \underset{\bv \in \mathbb{R}^m}{\operatorname{argmin}}\|\bv\|^2\quad
\text {subject to }\;\; p(\bx, \bu)^{\mathrm{T}} \bv \geq q(\bx, \bu)
\label{eqn:qp_icbf}
\end{align}
ensures safety i.e. forward invariance for set $\mathcal{S}$. 
\begin{remark}
    \normalfont Note that to leverage ICBF to compute safe control inputs for non-control affine (or control affine) nonlinear systems, one needs to know the integral feedback control law $\phi(\bx,\bu)$ a-priori, which is in general challenging, and also the nonlinear dynamics \eqref{eqn:nonlinear_system}, which might be changing with time due to external disturbances. Furthermore, there must be a mechanism to encode both the state/safety and input constraints on a single scalar-valued function $h(\bx,\bu)$ which in general is non-trivial. Our proposed approach in the following subsections addresses these challenges.
\end{remark}
\end{theorem}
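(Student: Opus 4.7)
The plan is to show that, under the modified integral law $\dot{\bu} = \phi(\bx,\bu) + \bv^\star(\bx,\bu)$, the time derivative of $h$ along the closed-loop augmented system \eqref{eqn:integrated_augmeted_system} satisfies $\dot{h}(\bx,\bu) + \gamma(h(\bx,\bu)) \geq 0$ pointwise, so that forward invariance of $\mathcal{S} = \{(\bx,\bu):\; h(\bx,\bu)\geq 0\}$ follows from the standard Nagumo/comparison-lemma argument used for ordinary CBFs. There are really only two technical ingredients: (i) algebraically linking $\dot h$ to $p,q$, and (ii) verifying that the QP \eqref{eqn:qp_icbf} is always feasible.

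First I would differentiate $h$ along the closed-loop trajectories:
\begin{align}
\dot{h}(\bx,\bu) &= \frac{\partial h}{\partial \bx}F(\bx,\bu) + \frac{\partial h}{\partial \bu}\bigl(\phi(\bx,\bu) + \bv^\star(\bx,\bu)\bigr) \nonumber\\
&= -q(\bx,\bu) - \gamma(h(\bx,\bu)) + p(\bx,\bu)^{\mathrm{T}}\bv^\star(\bx,\bu),
\end{align}
where the second equality uses the definitions of $p$ and $q$ from the ICBF definition. Rearranging, $\dot{h} + \gamma(h) \geq 0$ is equivalent to the QP constraint $p^{\mathrm{T}}\bv^\star \geq q$, so it suffices to show that $\bv^\star$ from \eqref{eqn:qp_icbf} exists and satisfies this inequality for every $(\bx,\bu)\in\mathcal{X}\times\mathcal{U}$.

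Next I would verify feasibility of the QP by case analysis on $p(\bx,\bu)$. If $p(\bx,\bu)\neq 0$, then for any $\alpha \geq q(\bx,\bu)/\|p(\bx,\bu)\|^2$ the vector $\bv = \alpha\, p(\bx,\bu)$ is feasible; an explicit minimum-norm solution is
\begin{align}
\bv^\star(\bx,\bu) = \max\!\left(0,\; \frac{q(\bx,\bu)}{\|p(\bx,\bu)\|^2}\right) p(\bx,\bu).
\end{align}
If $p(\bx,\bu)=0$, the ICBF-defining property directly yields $q(\bx,\bu)\leq 0$, so $\bv^\star = 0$ is trivially feasible (and optimal). Thus $p^{\mathrm{T}}\bv^\star \geq q$ holds on all of $\mathcal{X}\times\mathcal{U}$.

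Combining these two steps gives $\dot h \geq -\gamma(h)$ along every trajectory of the closed-loop augmented system. I would then close the argument with the comparison lemma: since $\gamma$ is class-$\mathcal{K}_\infty$ with $\gamma(0)=0$, the scalar ODE $\dot y = -\gamma(y)$, $y(0)=0$, has the unique solution $y\equiv 0$, so any initial condition with $h(\bx^0,\bu^0)\geq 0$ forces $h(\bx(t),\bu(t))\geq 0$ for all $t\geq 0$, i.e.\ forward invariance of $\mathcal{S}$. The main obstacle I anticipate is not the algebra but the technical subtlety at points where $p(\bx,\bu)=0$: the closed-form $\bv^\star$ can be discontinuous there, so one must either invoke Carath\'eodory/Filippov solutions to justify well-posedness of the closed-loop ODE, or smooth $\bv^\star$ in a neighborhood of the singular locus while preserving the constraint, exactly as in the original treatment of \cite{ames2020integral_cbf}.
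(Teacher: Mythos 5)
The paper offers no proof of this theorem --- it is imported verbatim from the cited reference \cite{ames2020integral_cbf} --- so there is nothing internal to compare against; your reconstruction is correct and is the standard argument: the identity $\dot h + \gamma(h) = p^{\mathrm{T}}\bv^\star - q$, feasibility of the QP via the case split on $p(\bx,\bu)$ (with the ICBF condition handling $p=0$), and the comparison-lemma/Nagumo step. Your closed-form minimizer $\max\left(0, q/\|p\|^2\right)p$ agrees with the expression the paper itself states just before Theorem 2, and your caveat about possible discontinuity of $\bv^\star$ on the locus $p(\bx,\bu)=0$ is a genuine regularity issue that the paper silently omits.
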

\begin{figure}[ht]
 \centering
\includegraphics[width=0.6\textwidth]{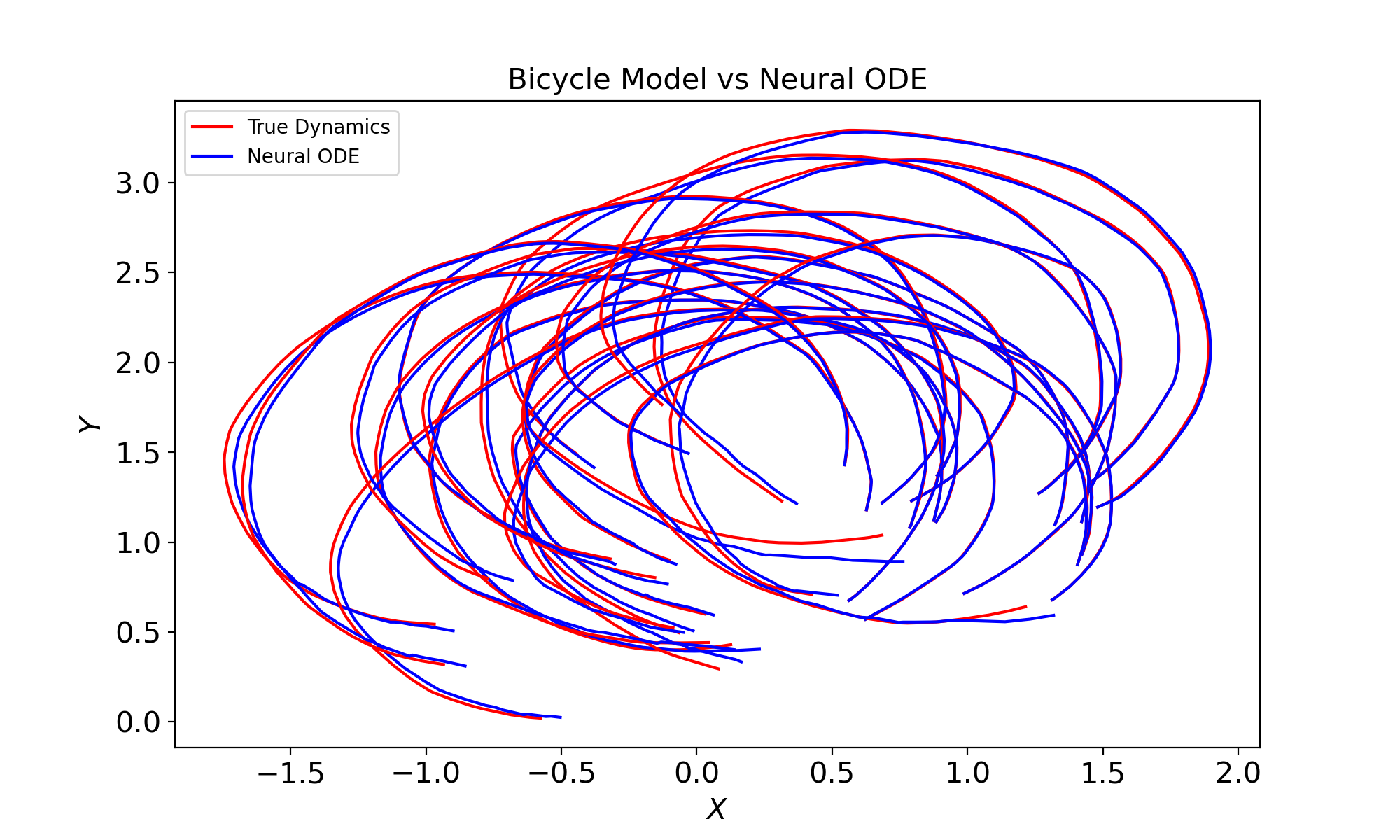}
 \caption{\small The figure shows the long-term prediction from the learned neural ODE model $F_\theta$ (blue) and the actual vehicle dynamics $F$ (red) over the time span of $10s$. The initial states for the system are sampled from a ball of radius 1 and the random control sequence are set of control inputs uniformly sampled from $[-1,1]^2$. 
 }
\label{fig:neural_ode}
\vspace{-10pt}
\end{figure}
\paragraph{Problem Statement:} Under the assumption that a valid ICBF, the governing system dynamics, and the integral control law are not known a-priori, design a sequence of control inputs that guarantee safety for the nonlinear system \eqref{eqn:nonlinear_system} and satisfaction of input constraints (i.e. $\bu\in\mathcal{U}$)
\section{Proposed approach\label{sec:main_result}}

In this section, we propose a deep learning framework to simultaneously learn $(i)$ the system dynamics $F(\bx,\bu)$, $(ii)$ the scalar-valued ICBF $h(\bx,\bu)$ (that allows encoding both the state and input constraints), and $(iii)$ the imitation-based stabilizing feedback controller $k(\bx)$. Finally, these three learned models are integrated to design safe neural network based feedback controllers $\bu_{\text{safe}}^\star(\bx)$ \eqref{eqn:u_safe_neural} for general nonlinear systems in the presence of input constraints.
\paragraph{Nonlinear system identification:}
\label{subsec:nonlinear_system_identification}
We leverage Neural ODEs \cite{chen2018neural_ode} that offer a promising approach to learning the controlled dynamics $ F $ in \eqref{eqn:nonlinear_system}. The essence of Neural ODEs is to design a neural network that, given a state $\bz$, predicts its derivative $\dot{\bz}$.
During training, a loss function, such as the Mean Squared Error (MSE) between the network's predicted derivative and the actual derivative, is minimized using standard optimization techniques. Once trained, this Neural ODE, in conjunction with standard ODE solvers, can simulate the system's \eqref{eqn:nonlinear_system} future trajectories from any given initial state $\bx^0$. We assume the availability of a finite dataset $\mathcal{D}_F$, comprised of system trajectories that serve as time-series data of states and corresponding control inputs. Formally, the dataset is defined as $\mathcal{D}_F = \{ \Gamma_1, \dots, \Gamma_m\}$, where each trajectory $\Gamma_i$ for all $i=[1,m]_d$ (where $[a,b]_d$ for $b>a$ denotes the set of integers $\{a,a+1,\dots,b\}$) is a sequence $\{(\bx_0, \bu_0), \dots,(\bx_N, \bu_N)\}$. In this sequence, $\bx_i = \bx(t_i)$ represents the state at time $t_i$, $\bu_i = \bu(t_i)$ is the control input at time $t_i$, and $t_{i-1}<t_i$ where $i\in[1,N]_d$.
 The Neural ODE is trained by minimizing the discrepancy between the predicted states and the observed states from the dataset $\mathcal{D}_F$. The loss function $\mathcal{L}_F(\theta)$ can be formulated as:
\begin{align}
    \mathcal{L}_F(\theta) =  \sum_{\Gamma \in \mathcal{D}_F} \sum_{i=1}^{N} \left\| \bx_i - \Phi_{F_{\theta}}(\bx_{i-1}, \bu_{i-1}, \Delta t) \right\|^2,
    \label{eqn:loss_nonlinear_system}
\end{align}
where $\theta$ denotes the parameters of the neural network $F_\theta$ and $\Phi_{F_{\theta}}$ (obtained by integrating $F_\theta$ over $\Delta t$ interval) denotes the state transition map induced by the Neural ODE over a time interval $\Delta t$, $F$ is the actual nonlinear dynamics, and $\|\cdot\|$ represents the Euclidean norm.
Upon successful training, the Neural ODE model can be utilized for long-term predictive tasks. Given a current state $\bx_k$ and a sequence of control inputs $\bu_k, \dots, \bu_{k+r}$, the model can forecast future states $\bx_{k+1}, \dots, \bx_{k+r+1}$ by integrating the learned dynamics i.e., $    \bx_{k+i+1} = \Phi_{F_{\theta^\star}}(\bx_{k+i}, \bu_{k+i}, \Delta t)$ for $i = [0,r]_d$.
This predictive capability enables the application of the learned model to control and planning tasks, where accurate long-term future state estimation is crucial.


\paragraph{Imitation learning for Nonlinear MPC:}
\label{subsec:imitation_nonlinear_mpc}

In this section, we approximate the Nonlinear Model Predictive Control (NMPC) policies by a neural network via imitation learning. For general nonlinear controlled systems (especially non-control affine systems), solving the optimization problem in real-time can be computationally expensive as they involve solving a sequence of nonlinear programs (NLPs). Imitation learning (IL) offers a way to learn an approximation of the optimal control strategy from expert demonstrations (optimal policies from NMPC in this case), thereby speeding up the control computation, especially for real-time applications such as off-road autonomy. 
The objective of NMPC is to minimize the cost function over a finite horizon $T$:
    \begin{align}
\underset{\substack{\bx_0, \bu_0, \ldots, \bx_{N_T}}}{\min}   \sum_{k=0}^{N_T-1} \tilde{L}\left(\bx_k, \bu_k\right) \;\;
\text { s.t. }\;\;  \bx_0=\bx^0, \;\; \bx_{k+1}=F_d\left(\bx_k, \bu_k\right),\;\; k=[0, N_T-1]_d
\label{eqn:standard_mpc}
\end{align}
where $N_T$ is the time horizon, ${\bx}^0$ is the initial condition and $F_d$ is obtained after applying fourth order Runga-Kutta discretization scheme to \eqref{eqn:nonlinear_system}. 
Note that in this formulation, we have not considered the collision and the input/state constraints. As will be discussed in later sections, these constraints would be encoded via learned Neural Integral Control Barrier Functions (Neural ICBFs). To leverage IL, a dataset $\mathcal{D}_c$ of optimal state-input trajectories i.e. $\mathcal{D}_c:=\{(\bx^\star_i, \bu_i^\star)\}_{i=1}^{N_c}$ is collected using the NMPC controller (solving \eqref{eqn:standard_mpc}), where $\bu_i^\star$ represents the optimal control input for current state $\bx^\star_i$. A neural network approximator $\pi:\mathcal{X}\rightarrow\mathbb{R}^m$ is then trained to map from states to optimal inputs i.e. $\hat{\bu}(\bx) = \pi_\theta(\bx)$
where $\theta$ are the parameters of the neural network model and $\hat{\bu}(\bx)$ is the optimal solution for \eqref{eqn:standard_mpc}.
The training objective is to minimize the difference between the predicted controls and the expert controls (from the NMPC controller).  In other words, we minimize the  loss function $\mathcal{L}_c(\theta)=\mathbb{E}_{\bx\sim \mathcal{D}_c}[\ell(\bx,\pi_\theta(\bx))]$
where $\mathcal{D}_c$ is the given state distribution, $\ell$ is the pointwise loss function for $\pi_\theta(\bx,\theta)$ for a given $\bx$. The final goal of the IL based framework is to learn the optimal set of parameters $\theta^\star_c$ that minimizes the expected loss i.e. $\theta^\star_c=\underset{\theta}{\text{argmin}}\;\mathcal{L}_c(\theta)$.
Once trained, the function approximator $\pi_\theta$ can be used to produce control inputs in real time, eliminating the need to solve the NMPC optimization problem at each timestep.
\begin{figure}[ht]
 \centering
\includegraphics[width=0.9\textwidth]{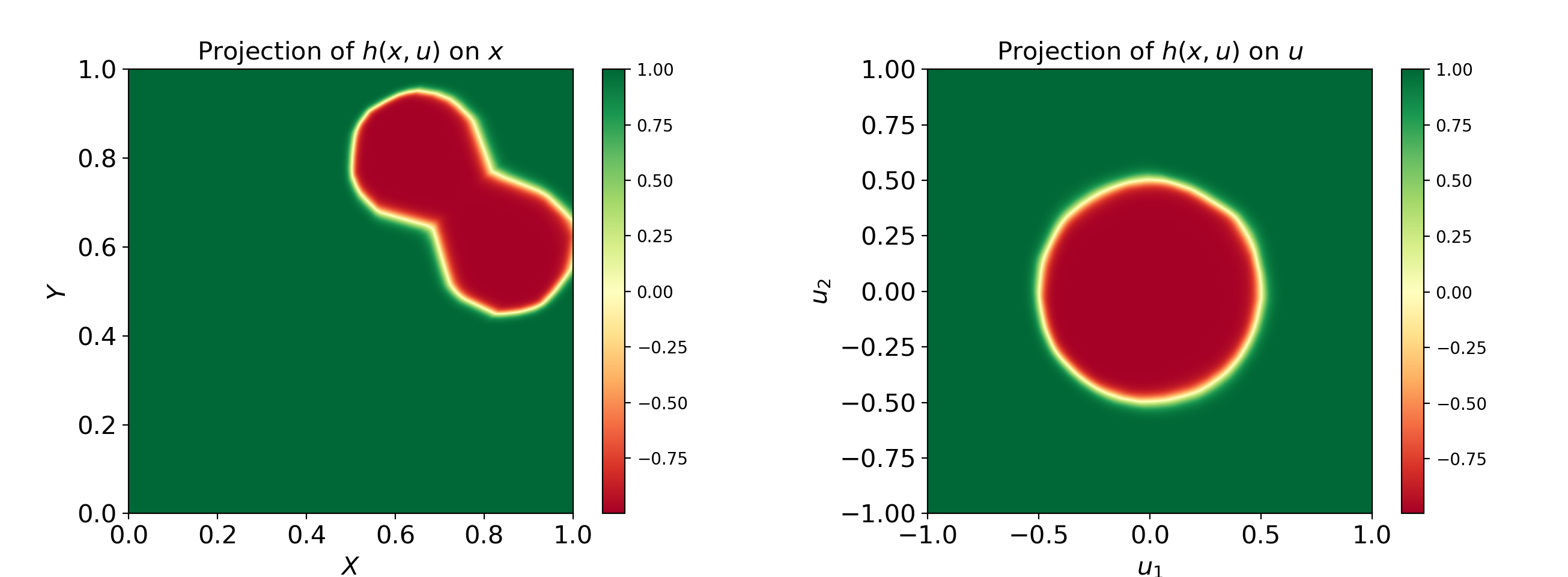}
 \caption{\small Projection of learned $h(\bx,\bu)$ on the $\bx$ and $\bu$ 2D planes for the vehicle dynamics based on bicycle model. As seen from the figure, our approach is able to encode both the state/safety and input constraints onto a single scalar-valued function of $\bx$ and $\bu$. The input constraint for the vehicle is that $\bu$ must belong to a ball of radius $0.5$ centered around the origin i.e. $\mathcal{U}=\{\bu:\; \|\bu\|^2\leq 0.25\}$ and the state constraints corresponding to two circular obstacles. As seen from the figure, the value of $h(\bx,\bu)$ on the boundary of regions is 0 ($\text{white}$), inside the unsafe region is negative ($\text{red}$), and in the safe region is positive ($\text{green}$).}
\label{fig:icbf}
\vspace{-10pt}
\end{figure}

\paragraph{Neural Integral Control Barrier Functions (Neural ICBF's):}
\label{subsec:neural_icbfs}
To learn a ICBF for \eqref{eqn:nonlinear_system}, we leverage a neural network based representation, denoted by $h_\theta(\bx,\bu):\mathcal{X}\times\mathcal{U}\rightarrow\mathbb{R}$:
where $\theta$ are the parameters of the neural network $h_\theta$. 
The objective is to ensure that the Lie derivative of the learned function along the learned nonlinear system is positive. Furthermore, $h_\theta(\bx,\bu)$ must be positive and non-positive if $(\bx,\bu)$ are safe and unsafe, respectively.  To achieve this, we define the following empirical loss function $\mathcal{L}_h(\theta)$:
\begin{align}
\small
\mathcal{L}_h(\theta) = \sum_{i=1}^{N_h} (-p_\theta(\bx^\text{S}_i,\bu^\text{S}_i)^\mathrm{T}\bv^\star(\bx, \bu)+q_\theta(\bx^\text{S}_i,\bu^\text{S}_i)-\epsilon)_++\sum_{i=1}^{N_s}(-h_\theta(\bx^\text{S}_i,\bu^\text{S}_i))_+
+\sum_{i=1}^{N_h-N_s}(h_\theta(\bx^\text{US}_i,\bu^\text{US}_i))_+
\label{eqn:loss_h}
\end{align}
where $x_+=\max\{0,x\}$, $N_h$ is the total number of state-input pairs, $N_s(<N_h)$ is the number of safe state-input pairs, $\bx^\text{S}_i\in\mathcal{S}_\bx$, $\bx^\text{US}_i\in\mathcal{X}\setminus\mathcal{S}_\bx$, $\bx^\text{S}_i\in\mathcal{U}$, $\bu^\text{US}_i\notin \mathcal{U}$, $\epsilon$ is an upper bound for the absolute error between learned and actual models (Theorem \ref{thm:app_error}), $\alpha(.)$ is a class-$\mathcal{K}_\infty$ function and $p_\theta(\bx, \pi_\theta(\bx) ) $ and $q_\theta(\bx, \pi_\theta(\bx) ) $ are given by
\begin{align}
\small
&p_\theta(\bx, \pi_\theta(\bx) ) :=\left(\frac{\partial h_\theta(\bx,\pi_\theta(\bx))}{\partial \bu}\right)^{\mathrm{T}},\nonumber\\ & q_\theta(\bx, \pi_\theta(\bx)):  =-\left(\frac{\partial h_\theta(\bx,\pi_\theta(\bx))}{\partial \bx} F_{\theta}(\bx, \pi_\theta(\bx))+\frac{\partial h_\theta(\bx,\pi_\theta(\bx))}{\partial \bu} \phi(\bx, \pi_\theta(\bx)) +\gamma(h_\theta(\bx, \pi_\theta(\bx))\right)\nonumber
\end{align}
Since the NN is trained on a finite set of data points, it is important to recognize that it may not meet the ICBF-based conditions throughout $\mathcal{X}\times\mathcal{U}$, even if the $\mathcal{L}_h(\theta)$ approaches zero. Towards that goal, SMT solvers \cite{zinage2023neural_koopman} (computationally expensive) and recently proposed faster verification methods \cite{wang2023simultaneous_verification_cbf_faster} can be leveraged.

\begin{remark}
 \normalfont   Note that it is possible to extend this framework to learn ICBFs for nonlinear systems with time-varying additive disturbances i.e. systems of the form $\dot{\bx}=F(\bx,\bu)+\ell(\bx)\bd(t)$. In that case, one can utilize, for instance, Disturbance Observer based ICBF (DO-ICBF) presented in \cite{zinage2023disturbance} and consequently an upper bound for the absolute error between the learned/trained safe input and the actual input can be derived using Theorem 3 in \cite{zinage2023disturbance} and Theorem \ref{thm:app_error}. However, for brevity and due to page limits we do not consider the case of nonlinear systems with additive disturbances. 
\end{remark}


\paragraph{Implementation:}
\label{sec:implementation}
The algorithm for safe control synthesis of the general unknown nonlinear system with input constraints is given in Algorithm \ref{alg:algorithm}. The main steps of Algorithm \ref{alg:algorithm} are as follows. First, the NN models for ICBF $h_{\theta}(\bx,\bu)$, system dynamics $F_{\theta}(\bx)$ and imitation learner $\pi_\theta$ that mimics NMPC is learned using the datasets $\mathcal{D}_h$, $\mathcal{D}_F$ and $\mathcal{D}_\pi$ respectively. This is given in Lines $25-26$ of Algorithm \ref{alg:algorithm}.
The iterative cycle for learning these NN models persists until the accuracy in the test data is satisfied to a predefined degree of accuracy. The function $\texttt{MAIN}$ in Line $24$ of Algorithm \ref{alg:algorithm} uses the learned NN models $\pi_\theta(\bx)$, $F_\theta(\bx,\bu)$ and $h_\theta(\bx,\bu)$ to synthesize a safe feedback controller $\bu^\star_{\text{safe}}(\bx)$ (Line $29$ of Algorithm \ref{alg:algorithm}) that guarantees safety (forward invariance) and also satisfies the input constraints (given by $\bu\in\mathcal{U}$). The learned safe controller $\bu_{\text{safe}}^\star(\bx)$ is analytically given by
\begin{align}
\bu_{\text{safe}}^\star(\bx)=\pi_\theta(\bx)+\begin{cases}\int_{t}^{t+\Delta t}\frac{q_\theta(\bx, \pi_\theta(\bx))}{\|p_\theta(\bx,\pi_\theta(\bx))\|^2} p_\theta(\bx,\pi_\theta(\bx)) \mathrm{dt} & \text { if } q_\theta(\bx, \pi_\theta(\bx))  >0 \\ 0 & \text { if }q_\theta(\bx, \pi_\theta(\bx)) \leq 0\end{cases}
\label{eqn:u_safe_neural}
\end{align}
where $\Delta t>0$ is the sampling time period.
It can be easily shown that the analytical solution to QP \eqref{eqn:qp_icbf} (for fixed $\bx$ and $\bu$ and for $p(\bx,\bu)\neq 0$) is given by $\bv^\star(\bx, \bu)=\frac{q(\bx, \bu)}{\|p(\bx, \bu)\|^2} p(\bx, \bu)$ if $q(\bx, \bu)>0$ otherwise $0$.
\begin{theorem}
    \normalfont Let $(\bx,\bu)\in\mathcal{X}\times\mathcal{U}$ and $(\by,\boldsymbol{w})_\mathcal{D}$ is the state-input pair for the dataset $\mathcal{D}$. If $F$, $F_\theta$, $h$, $h_\theta$, $\phi$, $\pi$ are continuous with Lipschitz constants $L_F,\;L_{F_\theta},$ $L_h,\;L_{h_\theta},\;L_{\phi},\;L_{\phi}$ respectively, then
    \begin{align}
        \|\bu^\star_\theta-\bu^\star\|\leq E_2(\phi,\pi_\theta)+\Delta t(3E_3(h,h_\theta)+E_1(F,F_\theta)+E_2(\phi,\pi_\theta)+E_1(\gamma(h),\gamma(h_\theta)))
        \label{eqn:thm_eqn_ineq}
    \end{align}
    where $E_1(M,N):=L_M\delta_1(M)+L_N\delta_1(N)+\mu(M,N)$, $E_2(M,N):=L_M\delta_2(M)+L_N\delta_2(N)$, $E_3(M,N):=L_M+L_N$, $\delta_1(N):=\underset{(\bx,\bu)\in\mathcal{D}_N}{\max}\;\|(\bx,\bu)-(\by,\bw)\|$,  $\delta_2(N):=\underset{\bx\in\mathcal{D}_N}{\max}\;\|\bx-\by\|$, $\mu(M,N):=\underset{(\bx,\bu)\in\mathcal{D}_N}{\max}\;\|M-N\|$ and $N$ is the model that approximates the actual model $M$.
    \label{thm:app_error}
\end{theorem}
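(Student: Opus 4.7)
The strategy is to split the error $\|\bu^\star_\theta - \bu^\star\|$ into the contribution from the learned nominal (imitation) controller and the contribution from the auxiliary safety correction $\bv^\star$, then propagate Lipschitz and data-fit errors through the analytic expression in \eqref{eqn:u_safe_neural} and the ICBF definitions of $p$ and $q$.

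\textbf{Step 1 (high-level decomposition).} I would write both controllers in the form ``nominal $+$ integral of auxiliary input.'' For the true safe controller this is $\bu^\star = \bu_{\mathrm{nom}}(\bx,\bu) + \int_t^{t+\Delta t} \bv^\star(\bx,\bu)\,d\tau$ with $\bv^\star = q\,p/\|p\|^2$ when $q>0$ and $0$ otherwise, and for the learned one the analogous expression from \eqref{eqn:u_safe_neural} with $\pi_\theta$, $p_\theta$, $q_\theta$. Triangle inequality gives
\[
\|\bu^\star_\theta-\bu^\star\| \;\le\; \bigl\|\pi_\theta(\bx)-\bu_{\mathrm{nom}}\bigr\| + \Delta t\sup\bigl\|\bv^\star_\theta(\bx,\pi_\theta(\bx))-\bv^\star(\bx,\bu)\bigr\|.
\]
The first summand is an imitation-learning error between $\pi_\theta$ and the integral feedback law $\phi$ along the sampled trajectories, which by Lipschitz continuity of $\phi$ and $\pi_\theta$ is at most $L_\phi\,\delta_2(\phi)+L_{\pi_\theta}\,\delta_2(\pi_\theta) = E_2(\phi,\pi_\theta)$, matching the leading term of \eqref{eqn:thm_eqn_ineq}.

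\textbf{Step 2 (bounding the $\bv^\star$ difference).} Using the closed form for the QP solution noted just before the theorem, I would bound $\|\bv^\star_\theta-\bv^\star\|$ by separately bounding $\|p_\theta-p\|$ and $\|q_\theta-q\|$ (absorbing the ratios $\|p\|^{-2}, \|p_\theta\|^{-2}$ into the constants under a standard nondegeneracy assumption, or directly using that $\|v^\star\|$ is the distance from $0$ to a half-space whose Lipschitz dependence on $(p,q)$ is linear). Since $p=(\partial h/\partial \bu)^{\mathsf T}$, the difference $\|p-p_\theta\|$ is controlled by the Lipschitz constants of $h$ and $h_\theta$ plus the sampling error, giving one copy of $E_3(h,h_\theta)$. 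For $q$, I would split by the triangle inequality into the three constituent terms of its definition: (i) $(\partial h/\partial \bx)F$, yielding a second copy of $E_3(h,h_\theta)$ combined with $E_1(F,F_\theta)$ for the dynamics error; (ii) $(\partial h/\partial \bu)\phi$, yielding a third copy of $E_3(h,h_\theta)$ combined with $E_2(\phi,\pi_\theta)$ because $\pi_\theta$ plays the role of $\phi$ inside $q_\theta$; and (iii) $\gamma(h)$, yielding $E_1(\gamma(h),\gamma(h_\theta))$. Summing these contributions reproduces exactly the $3E_3(h,h_\theta)+E_1(F,F_\theta)+E_2(\phi,\pi_\theta)+E_1(\gamma(h),\gamma(h_\theta))$ inside the parenthesis of \eqref{eqn:thm_eqn_ineq}, and the factor $\Delta t$ comes from integrating the uniform pointwise bound over $[t,t+\Delta t]$.

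\textbf{Main obstacle.} The delicate step is Step 2: the auxiliary input $\bv^\star$ is a nonlinear function of $p$ and $q$ through $q\,p/\|p\|^2$, so propagating Lipschitz bounds cleanly requires either assuming $\|p\|$ is bounded away from zero (which is implicit in the ICBF definition whenever the constraint is active) or arguing piecewise via the ``if $q>0$ else $0$'' branch. A secondary subtlety is that $\|p-p_\theta\|$ is a difference of gradients of $h$ and $h_\theta$, so the Lipschitz constants $L_h, L_{h_\theta}$ must be interpreted as smoothness (gradient-Lipschitz) constants of the two scalar functions; once this is granted, the chain-rule expansion of each summand in $q$ is routine and the constants line up by construction with $E_1, E_2, E_3$ as defined in the theorem statement.
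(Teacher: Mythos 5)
Your proposal follows essentially the same route as the paper's proof: bound $\|\phi(\bx)-\pi_\theta(\bx)\|$ by a triangle inequality through a dataset point to get $E_2(\phi,\pi_\theta)$, then bound the discrepancies in $p$ and $q$ term by term (one $E_3(h,h_\theta)$ from $p$, two from $q$, plus $E_1(F,F_\theta)$, $E_2(\phi,\pi_\theta)$, and $E_1(\gamma(h),\gamma(h_\theta))$), and finally multiply by $\Delta t$ from the integral correction in \eqref{eqn:u_safe_neural}. You are in fact somewhat more careful than the paper, since you explicitly flag the need for $\|p\|$ to be bounded away from zero when propagating the bound through $q\,p/\|p\|^2$ and the need to read $L_h, L_{h_\theta}$ as gradient-Lipschitz constants, both of which the paper's proof passes over silently.
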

\begin{proof}
    \normalfont Let $(\bx,\by)\in\mathcal{X}\times\mathcal{U}$ and $(\by,\boldsymbol{w})_\mathcal{D}$ be a sample belonging to the dataset $\mathcal{D}$. The absolute error in $\phi$ can be upper bounded as follows:
    \begin{align}
        \|\phi(\bx)-\pi_\theta(\bx)\|&\leq \|\phi(\bx)-\phi(\by)\|+\|\phi(\by)-\pi_\theta(\by)\|+\|\pi_\theta(\by)-\pi_\theta(\bx)\|\nonumber\\
&\leq L_\phi\delta_2(\phi)+\mu(\phi,\pi_\theta)+L_{\pi_\theta}\delta_2(\pi_\theta):=E_2(\phi,\pi_\theta)\label{eqn:error_phi}
    \end{align}
    Similarly, it can be shown that the following inequalities hold true
    \begin{align}
    \small
       \|F(\bx,\bu)-F_\theta(\bx,\bu)\| \leq E_1(F,F_\theta),\; \left\|\frac{\partial \Delta h(\bx,\pi_\theta(\bx))}{\partial \bu}\right\|\leq E_3(h,h_\theta),\; \left\|\frac{\partial \Delta h(\bx,\pi_\theta(\bx))}{\partial \bx}\right\|\leq E_3(h,h_\theta)\nonumber
    \end{align}
    where $\Delta h(\bx,\pi_\theta(\bx))=h(\bx,k(\bx))-h_\theta(\bx,\pi_\theta(\bx))$. Consequently, we have
    \begin{align}
    \small
        p_\theta(\bx, \pi_\theta(\bx))\leq E_3(h,h_\theta),\quad q_\theta(\bx, \pi_\theta(\bx))\leq 2E_3(h,h_\theta)+E_1(F,F_\theta)+E_2(\phi,\pi_\theta)+E_1(\gamma(h),\gamma(h_\theta))\label{eqn:error_p_and_q}
    \end{align}
    Therefore using \eqref{eqn:u_safe_neural}, \eqref{eqn:error_phi}, and \eqref{eqn:error_p_and_q},  \eqref{eqn:thm_eqn_ineq} holds.
\end{proof}

\section{Results}
\label{sec:results}
In this section, we benchmark nonlinear control examples. First, we consider the non-control affine vehicle dynamics based on Ackerman steering and subsequently highly nonlinear pendulum on a quadrotor system. For both these examples, we use the $\texttt{do-mpc}$ package to solve the NLP's for NMPC for generating expert trajectories $\mathcal{D}_u$. All the code is implemented in Pytorch. We compare our approach with ORCA-MAPF \cite{orcamapf} that integrates decentralized collision avoidance facilitated by ORCA with centralized Multi-Agent Path Finding (MAPF) and reinforcement learning-based method CADRL \cite{cadrl}.
We compare our approach with respect to two performance metrics. First, is the average computational time required to generate a control sequence for a nonlinear system to reach from uniformly sampled $100$ initial states to the desired goal states in the presence of $M$ obstacles. Second, is the average cost of the solution returned by the algorithm. The cost function is given by
\begin{align}
c(\bx^0)=&\underset{\bu_i\in\mathcal{U}}{\min}\;\sum_{i=1}^N\bx_i^\mathrm{T}Q\bx_i+\bu^\mathrm{T}_iR\bu_i,\;\; \text{s.t.}\;\;{\bx}_{k+1}=F_d(\bx_k,\bu_k),\quad \bx_0=\bx^0,\quad k\in[0,N-1]_d
\label{eqn:cost_function}
\end{align}
where $\bx^0$ is the initial condition, $Q\succ 0$, $R\succ 0$ and $T$ is time horizon. 
The average cost is defined by $\sum_{i=1}^{N_{\text{avg}}}c(\bx^0_i)/N_{\text{avg}}$ where $\bx^0_i$ are randomly sampled initial conditions. We choose $N_{\text{avg}}=100$.
\begin{minipage}{.48\textwidth}
\paragraph{Simulation setup:} The dynamics of the vehicle are non-control affine as in \cite{zinage2023disturbance}.
The objective for the vehicle is to avoid a set of $M$ circular obstacles in 2D plane and reach the goal 
 state while minimizing the cost function \eqref{eqn:cost_function}. For the quadrotor-pendulum system, we consider the dynamics presented in \cite{beard2008quadrotor}. The goal of the quadrotor-pendulum system is to also avoid $M$ spherical obstacles placed randomly in the 3D workspace to reach the desired goal position.
\vspace{0.3cm}
\paragraph{Trajectory data generation:}
We randomly sample $200$ initial states $\bx^0$ in the regions $[0,1]^n$ for the vehicle dynamics ($n=2$) and the quadrotor-pendulum system ($n=12$). Nonlinear MPC is then employed to compute the sequence of optimal inputs from sampled initial states $\bx^0$ to the origin.
These inputs are then used by the imitation learner based neural $\pi_\theta$ network to mimic the NMPC. For learning $h_\theta$ we sample $10,000$ state-input pairs for both systems. For learning $F_\theta$, we choose $\Delta t=0.1s$ and $T=10s$ and simulate $200$ trajectories for time horizon $T$ with random inputs in $[0,1]^m$ where $m=2$ and $m=4$ for vehicle and quadrotor-pendulum system respectively.
\vspace{0.3cm}

\paragraph{Methodological Configuration.}
The architecture of neural networks $\pi_\theta(\bx),\; F_\theta(\bx,\bu)$ and $h_\theta(\bx,\bu)$, comprises a multilayer perceptron (MLP) with $2$, $4$, and $4$ intermediary layers respectively for vehicle system and $2$, $4$, and $4$ intermediary layers respectively for the quadrotor-pendulum system, each consisting of

\end{minipage}
\hfill
\begin{minipage}{.48\textwidth}
\begin{algorithm}[H]
\small
\caption{Proposed approach}
\label{alg:algorithm}
\textbf{Input}: $\mathcal{D}_F$, $\mathcal{D}_h$, $\mathcal{D}_c$ (Section \ref{sec:main_result})\\
\textbf{Parameter}: $\Delta t$, $N_d$,  
$N_T$, $N_h$ (Section \ref{sec:main_result})\\
\textbf{Output}: $F_\theta$, $h_\theta$, $\pi_\theta$, $\bu^\star_{\text{safe}}(\bx)$\\
\begin{algorithmic}[1] 
\STATE $\textbf{function}\;\;\texttt{SYSTEM}(\mathcal{D}_F)$
  \quad\quad\FOR{each trajectory $\Gamma$ in $\mathcal{D}_F$}
        \FOR{time step $i$ from $1$ to $T$}
            \STATE $\hat{\bx}_i \leftarrow \Phi_{F_{\theta}}(\bx_{i-1}, \bu_{i-1}, \Delta t)$
            \STATE $\mathcal{L}_F(\theta) \leftarrow \mathcal{L}_F(\theta)+\left\| \bx_i - \hat{\bx}_i \right\|^2$
        \ENDFOR
    \ENDFOR 
\STATE $\quad\textbf{Repeat:}$
\STATE $\quad\quad\quad\theta\leftarrow\theta-\alpha\nabla\mathcal{L}_F(\theta)$
\STATE $\quad\textbf{until}\; \text{convergence}$
\STATE $\textbf{end function}$
\STATE $\textbf{function}\;\;\texttt{ICBF}(\mathcal{D}_h,F_{\theta^\star})$
\STATE $\quad\quad\quad\mathcal{L}_h(\theta)\leftarrow \text{Eqn. } \eqref{eqn:loss_h}$
\STATE $\quad\textbf{Repeat:}$
\STATE $\quad\quad\quad\theta\leftarrow\theta-\alpha\nabla\mathcal{L}_h(\theta)$
\STATE $\quad\textbf{until}\; \text{convergence}$
\STATE $\textbf{end function}$
\STATE $\textbf{function}\;\;\texttt{Controller}(\mathcal{D}_u)$
\STATE $\quad\quad\quad \mathcal{L}_c\leftarrow \mathbb{E}_{\bx\sim \mathcal{D}}[\ell(\bx,\phi(\bx;\theta))]$
\STATE $\quad\textbf{Repeat:}$
\STATE $\quad\quad\quad\theta\leftarrow\theta-\alpha\nabla\mathcal{L}_h(\theta)$
\STATE $\quad\textbf{until}\; \text{convergence}$
\STATE $\textbf{end function}$
\STATE $\textbf{function}\;\;\texttt{MAIN}$
\STATE $F_\theta\leftarrow\texttt{SYSTEM}(\mathcal{D}_F),\;\;h_\theta\leftarrow\texttt{ICBF}(\mathcal{D}_h)$
\STATE $\pi_{\theta}\leftarrow\texttt{Controller}(\mathcal{D}_u)$
\STATE $\bx\leftarrow\bx^0,\quad\bu\leftarrow\pi_{\theta}(\bx^0),\quad t\leftarrow 0$
\WHILE{$\|\bx-\bx_{\text{goal}}\|\leq\epsilon_1 (\text{threshold})$}
\STATE $\bu^\star_{\text{safe}}(\bx)\leftarrow\text{Eqn. }\eqref{eqn:u_safe_neural}$
\STATE $\bx(t+\Delta t)\leftarrow \text{Integrate }F_{\theta}(\bx(t),\bu^\star_{\text{safe}}(\bx))$
\STATE $t\leftarrow t+\Delta t$
\ENDWHILE
\STATE $\textbf{end function}$
\end{algorithmic}
\end{algorithm}
\end{minipage}
128 nodes, employing $\texttt{ReLu}$ as the activation function for $\pi_\theta(\bx)$ and $h_\theta(\bx,\bu)$ and $\texttt{tanh}$ activation function for $F_\theta(\bx,\bu)$.
For learning $F$, the neural ODE-based loss function \eqref{eqn:nonlinear_system} incorporates a future prediction span $T$ of $10s$ with sampling time $\Delta t=0.1s$ and $1000$ trajectories. For learning $h_\theta(\bx,\bu)$, a set of $10,000$ points are randomly distributed across the domain of the state space $\mathcal{X}$ and classified as safe or unsafe for training. Furthermore, for learning IL based controller $\pi$ we use $1000$ expert trajectories generated from NMPC. Lastly, we use an $\texttt{Adam}$ optimizer for training these neural networks.
\begin{table}[H]
\small
\centering
\begin{minipage}{.5\linewidth}
\centering
\begin{tabular}{|c|c|c|}
\hline
Method & Comp. time (in s) & Cost \\
\hline
NMPC & 5.562 & 4.583 \\
\hline
ORCA-MAPF & 1.373 & 6.839 \\
\hline
CADRL & 0.837 & 5.683 \\
\hline
Our approach & 0.736 & 4.631 \\
\hline
\end{tabular}
\caption{Vehicle system
}
\label{tab:vehicle}
\end{minipage}%
\begin{minipage}{.5\linewidth}
\centering
\begin{tabular}{|c|c|c|}
\hline
Method & Comp. time (in s) & Cost \\
\hline
NMPC & 10.242 & 15.232 \\
\hline
ORCA-MAPF & 2.328  & 21.281 \\
\hline
CADRL & 2.181  & 19.283 \\
\hline
Our approach & 1.550  & 15.672\\
\hline
\end{tabular}
\caption{Quadrotor-pendulum system}
\label{tab:quadrotor}
\end{minipage}
\end{table}
\paragraph{Discussion:} 


The comparative analysis presented in Table \ref{tab:vehicle} highlights the efficacy of the proposed approach over various classical and learning-based control methods. Notably, NMPC requires the most computational time yet achieves optimal cost as it is based on an optimization-based framework. This is reasonable computational time as NMPC solves a series of non-convex NLP (non-convex because of the collision avoidance constraints) problems that are computationally expensive. In contrast, ORCA-MAPF and CADRL demonstrate improved computational efficiency, but at the expense of slightly higher optimal costs compared to our approach. This would be mainly because CADRL uses a heuristic sparse reward function depending on whether the agent has reached the goal or not and ORCA-MAPF is based on carefully-designed strategies based on velocity cones for collision avoidance. Our method, however, excels by requiring the least average computational time while also resulting in an average cost that is close to the average optimal cost (from NMPC), indicative of its potential for real-time applications where efficiency and optimal cost of safe trajectories are paramount. This is mainly because our proposed approach minimally deviates from the optimal trajectory (generated in the absence of obstacles) which is obtained by solving the QP \eqref{eqn:qp_icbf} resulting in a safe trajectory in the presence of obstacles. Consequently, our approach suggests a significant improvement in computational performance from classical NMPC and learning based methods, with a modest trade-off in terms of average cost.

\section{Conclusion\label{sec:conclusion}}
In this paper, we propose a real-time deep learning framework for generating safe control inputs for nonlinear controlled systems with unknown dynamics in the presence of input constraints. This framework simultaneously learns the system dynamics, the integral control law for precise trajectory tracking, and the Integral Control Barrier Functions (ICBFs) which correspond to safety certificates that can simultaneously encode both the state and input constraints into a scalar-valued function. The numerical simulations conducted demonstrate the effectiveness and efficiency of our proposed method, with respect to the computational time and the cost of the trajectories generated. Future work could focus on extending this framework to vision-based tasks, scalability for multi-agent motion planning problems and implementation of the proposed approach onto real vehicles for off-road navigation tasks. 

\section{Acknowledgements}
The authors would like to thank Cyrus Neary of the University of Texas at Austin for his assistance and provision of the Tikz code.
\bibliography{main}

\end{document}